\newcommand{\inlineitem}[1][]{
\ifnum\enit@type=\tw@
    {\descriptionlabel{#1}}
  \hspace{\labelsep}
\else
  \ifnum\enit@type=\z@
       \refstepcounter{\@listctr}\fi
    \quad\@itemlabel\hspace{\labelsep}
\fi}
\newtheorem{definition}{Definition}
\newtheorem{theorem}{Theorem}
\newtheorem{lemma}{Lemma}
\newcommand{\optgeneral}{\alpha}
\newcommand{\opttruth}{\optgeneral^*}
\newcommand{\optalgo}{\widetilde{\optgeneral}}
\newcommand{\regretalgo}{\beta} 
\newcommand{\defn}{:=}
\newcommand{\setM}{[M]}
\newcommand{\algotitle}{In\emph{sense}tive
Mechanism\xspace}
\newcommand{\bestutility}{U^*}
\title{An Incentive Mechanism for Crowd Sensing with Colluding Agents}
\author{Susu Xu, Weiguang Mao, Yue Cao, Hae Young Noh, Nihar B. Shah\\Carnegie Mellon University \\ \{susux, weiguanm, yuec1\}@andrew.cmu.edu, noh@cmu.edu, nihars@cs.cmu.edu}
\date{}
\begin{document}

\maketitle

\begin{abstract}
Vehicular mobile crowd sensing is a fast-emerging paradigm to collect data about the environment by mounting sensors on vehicles such as taxis. An important problem in vehicular crowd sensing is to design payment mechanisms to incentivize drivers (agents) to collect data, with the overall goal of obtaining the maximum amount of data (across multiple vehicles) for a given budget. Past works on this problem consider a setting where each agent operates in isolation---an assumption which is frequently violated in practice. 
In this paper, we design an incentive mechanism to incentivize agents who can engage in arbitrary collusions. We then show that in a ``homogeneous'' setting, our mechanism is optimal, and can do as well as any mechanism which knows the agents' preferences a priori. Moreover, if the agents are non-colluding, then our mechanism automatically does as well as any other non-colluding mechanism. We also show that our proposed mechanism has strong (and asymptotically optimal) guarantees for a more general ``heterogeneous'' setting. Experiments based on synthesized data and real-world data reveal gains of over 30\% attained by our mechanism compared to past literature. 
\end{abstract}
\section{Introduction}
Mobile Crowd Sensing (MCS) is a new and fast-rising community sensing paradigm to fulfill the increasing demand of diverse urban sensing data. The recent proliferation of mobile smart devices (e.g., smart phones, smart watches, etc.) provides increasingly capable sources of sensors (e.g., GPS, gyroscope, accelerometer, camera, etc.) to MCS applications. In MCS, crowdsoucer engages a crowd of participants (called agents) to collect sensory data using existing pervasive mobile devices~\citep{ganti2011mobile,lane2010survey}, allowing for collection of diverse data at scale. In an MCS system, after obtaining data from agents, the crowdsoucer cleans and analyzes the data for various applications such as smart city management~\citep{thiagarajan2009vtrack,dutta2009common} and social applications~\citep{eisenman2009bikenet,zheng2013u}. Vehicular MCS is a type of MCS that employs vehicles such as taxis as agents. It is gaining increasing popularity in recent times particularly for applications such as air/noise pollution monitoring~\citep{zheng2013u,devarakonda2013real}, transportation monitoring~\citep{thiagarajan2009vtrack,wan2016mobile},  and infrastructure monitoring~\citep{hull2006cartel}. In vehicular MCS, a crowdsoucer employs monetary incentives to motivate agents (drivers) to share spatial-temporal information collected along with their trajectories. Here, agents first ``bid'' the amount of monetary remunerations they wish to receive from the crowdsourcer for the task of collecting and reporting data. The bid by an agent is supposed to reflect the ``threshold'' of remuneration beyond which they will accept to do the task and below which they will not. The crowdsourcer then makes a monetary offer to each agent, which the agent may accept and perform the task. The overall goal of the crowdsourcer is to get the maximum possible number of tasks done (that is, obtain the maximum amount of data) for a given budget, and the goal of the agents is to maximize the monetary remunerations they receive.

Various past works have studied and developed incentive mechanisms to motivate agents to bid truthfully in crowdsensing applications; see~\citep{duan2012incentive,yang2012crowdsourcing,zhao2014crowdsource,jin2017theseus} and references therein. The methods proposed in past literature, however, operate under the assumption that the agents (drivers) never communicate or collude with each other. This assumption of non-collusion is known to be frequently violated in practice. For example, taxi drivers can discuss and collaborate via their radio systems or in person. Such collusions present significant challenges to the design of crowd sensing systems and the collection of quality data with an efficient utilization of the crowdsourcer's budget. For instance, the recent paper~\citep{ji2017designing} warns that collusions among agents may result in ``serious problems for the platform, such as losing
system-wide utility and deteriorating other users’ enthusiasm
to participate.'' A number of other research studies~\citep{wang2013artsense,wang2014enabling} have shown that with the increasing of colluding agents, it becomes more difficult to eliminate the effects of collusion attacks on the crowd sensing system. A second approach to combat collusions -- at least in crowdsourcing applications -- is to assume knowledge of some ``gold standard'' ground truth~\citep{gneiting2007strictly,lambert2009eliciting,shah2016double,ShahZhouPeres2015,shah2016selfcorrection} and design proper scoring rules. However, this is a very strong assumption in our crowdsensing setting, and we will not make any such assumption in this paper. 

In this paper, we propose a novel incentive mechanism -- termed  ``\algotitle''\footnote{The name captures the fact that it is an \emph{in}cen\emph{tive} mechanism to crowd \emph{sense}, and also highlights its primary feature that it is insensitive to collusions among agents.} -- for the crowd sensing problem with colluding agents. We prove that in a setting where the thresholds of agents are ``homogeneous'', our \algotitle is optimal in that truth-telling is optimal for the agents, and the number of tasks is maximized for the crowdsourcer when agents report truthfully. We then move on to a more general ``heterogeneous'' threshold setting where we first prove the impossibility of any mechanism that can ensure truth-telling; we then show that our proposed \algotitle continues to fare quite well even in this setting. In numerical evaluations, we find that our proposed mechanism consistently outperforms the state-of-the-art and can lead to gains of 30\% or higher in the amount of data collected under a given budget. 

The rest of the paper is organized as follows. We formally introduce the problem in Section~\ref{sec:problem}. We then present our main results including our mechanism, theoretical guarantees, and numerical evaluations, in Section~\ref{sec:incentive}. We conclude the paper with a discussion in Section~\ref{sec:conclu}. This paper also includes proofs of our main theoretical results in appendix.

\section{Problem formulation}
\label{sec:problem}

The system consists of a crowdsourcer with budget $B > 0$, and $M \ge 2$ agents, whom we index as $1,\ldots,M$. Each agent has some private threshold $T_{m}^{*} \geq 0$ such that the agent is incentivized to collect data in that round if and only if the remuneration provided by the crowdsourcer is at least $T_{m}^{*}$. This threshold captures the willingness of the agent to carry out the data collection task, and is assumed to be constant across the $N$ rounds. This assumption is reasonable as the duration of each round and the entire game are usually short (few hours), and consequently the thresholds of drivers vary little across the rounds~\citep{devarakonda2013real,hasenfratz2015deriving}. The threshold is known to that agent, but unknown to the crowdsourcer. The entire process comprises $N \ge 1$ rounds of data collection. At the beginning of each round (say, round $n \in \{1,\ldots,N\}$), the agents may communicate with each other, and then each agent (say agent $m$) bids a number $T_{m, n}$ to the crowdsourcer. 
This bid is supposed to represent the true threshold of that agent; however, the agents may be strategic and report a larger number on purpose.
We allow for every agent to be cognizant of each others' true thresholds as well as bids. The crowdsourcer then offers each agent (say, agent $m$) a reward $R_{m,n}$ if the agent agrees to collect data in the current round $n$. Each agent may then accept or reject the reward. 
If an agent accepts, then the agent collects and reports data to the crowdourcer in that round and obtains the reward $R_{m,n}$; otherwise the agent does not collect data or receive the reward. We term the data obtained by the crowdsourcer from each agent for each round as one data point. 

The overall goal of the crowdsourcer is to maximize the number of data points obtained across $N$ rounds, subject to the budget constraint. As we assume that the entire incentive mechanism employed is publically known, it is possible for the agents to collude and exploit the mechanism to maximize their own utility. We formulate the objective of the crowdsourcer and the agents in the sections below.

\subsection{Objective of the crowdsourcer}
The objective of the crowdsourcer is to design an appropriate incentive mechanism to ensure that after $N$ rounds, the budget is best utilized to collect as much data as possible. The objective function refers to the amount of accepted tasks, which equals to the final amount of collected data. We formulate the optimization problem for crowdsourcer in Equation~\eqref{eq1} below:
\begin{subequations}
\begin{align}
    \max_{\{R_{m,n}\}_{m=1\ldots M,n=1\ldots N}} &~~\sum_{n=1}^N\sum_{m=1}^M I(R_{m,n}\geq T_{m}^{*})\label{eq1a}\\
    \textrm{such that }& \sum_{n=1}^N\sum_{m=1}^M R_{m,n}\leq B\label{eq1c}
\end{align}
\label{eq1}
\end{subequations}
\noindent Here $I$ denotes the indicator function. The objective of the crowdsourcer is to maximize the total number of data points collected~\eqref{eq1a}, or in other words, the total number of tasks accepted by the agents. Note that any agent $m$ during round $n$ will agree to perform the task if and only if $R_{m,n} \geq T^*_{m}$. The threshold $T_{n}^{*}$ represents the minimum monetary incentive required for the agent motivate herself/himself to accept a task~\citep{angelopoulos2014characteristic}. We also note that the crowdsourcer cannot simply solve the problem~\eqref{eq1} as a standard optimization problem because it does not know the true thresholds $\{T^*_{m}\}_{m=1}^{M}$ of the agents.

In this paper, we will measure the efficiency of any mechanism in terms of its \emph{regret} defined as follows. We let $\opttruth$ denote the optimal value of the objective~\eqref{eq1} that the crowdsourcer can achieve in the hypothetical situation that all agents always report their thresholds truthfully. In this case, the values of thresholds $\{T^*_{m}\}_{m=1}^{M}$ become known to the crowdsourcer, and the problem~\eqref{eq1} then turns out to be a simple packing problem. Then for any proposed mechanism to elicit the thresholds as bids from the agents, we denote the (expected) value of the objective~\eqref{eq1a} achieved by the mechanism as $\optalgo$. We then define the regret $\regretalgo$ of this mechanism as
\begin{align*}
    \regretalgo \defn   \opttruth - \optalgo.
\end{align*}
\emph{The goal is to design incentive mechanisms that have the smallest regret $\regretalgo$ possible.} 

\subsection{Objective of agents}
The objective of an agent is to maximize the total remuneration they earn across the $N$ rounds. Specifically, any agent $m$ will report thresholds $\{ T_{m, n} \}_{n=1}^{N}$ that maximize the objective
\begin{align}
    \sum_{n=1}^N R_{m,n} ~I(R_{m,n} \geq T_m^*).
\label{eq2}
\end{align}
Here, $\{ R_{m,n} \}_{n=1}^N$ is the set of remunerations provided by the crowdsourcer's mechanism for the reported bids of the agents.\footnote{One may additionally want to subtract out any costs incurred by the agents from their objective. However, in the practical setting of crowd sensing from taxi drivers, the agents suffer only a negligible cost~\citep{zhang2016incentives} as they only need to perform a few operations on their phone.} 
Finally, if there exist multiple strategies to obtain the maximum profit, the agent will follow the strategy that requires accepting the smallest number of tasks.

\subsection{Equilibrium of colluding agents}
In this section we discuss the idea of equilibrium among agents that might collude in price rigging. We begin by noting that because of the possible collusion, the widely used Nash equilibrium is inappropriate in our context. Hence we instead employ the notion of \textit{perfect cooperative equilibrium} (PCE)~\citep{rong2014cooperative} that is better suited to model colluding agents. 

In order to formally define PCE, we first introduce the concepts of best response and best utility. Consider an $M$-agent game $G$ with a set of agents denoted as $\setM$. Let $S_i$ be the set of strategies for any agent $i \in \setM$, and $s_i \in S_i$ the strategy that agent $i$ will take, $s_{-i}$ the (vector of) strategies that agents in $\setM \backslash \{i\}$ take in game $G_{s_i}$. The utility obtained by agent $i$ taking strategy $s_i$ and others taking $s_{-i}$ is denoted as $U_i(s_i, s_{-i})$.

The notion of \textbf{best response} in the game $G$ is then defined as such: Let $G_{s_i}$ be the ($M$-$1$)-agent game among agents in the set $\setM \backslash \{i\}$ when agent $i$ takes strategy $s_i$. 
The strategy $s_i$ is called a best response in game $G$ if $s_i$ maximizes agent $i$'s expected utility $U_i(s_i, s_{-i})$ 
given that the other agents are playing a Nash Equilibrium $s_{-i} \in$ $NE^G(s_i)$ in $G_{s_i}$. Given the fixed budget $B$, $\bestutility_i$ is the upper bound of utility agent $i$ can obtain for any possible strategy $s_i$ when other agents reach a Nash Equilibrium $NE^G(s_i)$ in $G_{s_i}$ correspondingly. 
We define this upper bound, $\bestutility_i$ as the \textbf{best utility}.
\begin{align*}
    \bestutility_i \defn \sup_{s_i \in S_i, s_{-i} \in NE^G(s_i)} U_i(s_i, s_{-i}).
\end{align*}
With these preliminaries, we are now ready to define PCE.
\begin{definition}[Perfect cooperative equilibrium, PCE]
A strategy profile $s$ is a perfect cooperative equilibrium (PCE) in an $M$-agent game if for every agent $i \in \setM$, we have
\begin{align*}
     U_i(s_i, s_{-i}) \ge \bestutility_i .
\end{align*}
\end{definition}
The definition of PCE coincides with the intuition if a strategy generates greater or equal utility for an agent than the utility obtained by her/him
acting selfishly, then it is stable and will be favored by all agents.

\section{Main results}
\label{sec:incentive}
In this section, we present our proposed \algotitle and provide associated guarantees.

\subsection{\algotitle}
\label{sec:algo}
\begin{algorithm}[ht]
 \SetKwInOut{Input}{Input}
 \SetKwInOut{Output}{Output}
 \DontPrintSemicolon
 \Input{Total budget $B$, Number of rounds $N$, Number of agents $M$, Set of agents $\setM=\{m\}_{m=1}^M$}
 \Output{ incentives collected $\{R_{m,n} | m \in \setM, n \in N \}$}
 initialize $R_{m,n} \gets 0$ for each agent $m$, round $n$\;
 initialize thresholds $T_{m,0} \gets \infty$ for each agent $m$\;
 initialize allocated budget $B_{used} \gets 0$ \;
 broadcast $B, N, M$ \; 
 \For{$n \gets 1$ \KwTo $N$}{
    \For{$m \gets 1$ \KwTo $M$}
    {
    \tcp{Thresholds update}
        collect submitted bid of agent $m$ as $T_{m,n}$\;
        $T_{m,n} \gets min(T_{m,n}, T_{m, n-1})$ \;
    }
    \tcp{Calculate incentive}
    $R \gets \min \Big(T_{1,n}, \cdots, T_{M,n}, \dfrac{B - B_{used}}{N-n+1} \Big)$ \;
    \tcp{Select agents}
    $K \gets \Big\lfloor \dfrac{B - B_{used}}{R \times (N-n+1)} \Big\rfloor$ \;
    
    stable sort the list of agents $\{m\}_{m=1}^M$ in ascending order according to their thresholds 
    $\{T_{m,n}\}_{m=1}^M$, choose top $K$ agents as $selected$\;
    \tcp{Task allocation}
    \For{$m \in selected$}
    {
        assign task to $m$ with incentive $R$\;
        \If{$m$ accepts task} 
        {
            $R_{m,n} \gets R$  \tcp{Assign incentive}
            $T_{m,n} \gets R_{m,n}$ \tcp{Calibrate threshold}
            $B_{used}\gets B_{used} + R_{m,n}$ \;
        }
    }
}
 \caption{\algotitle}
 \label{algorithm1}
\end{algorithm}

We begin with an intuition of our proposed \algotitle. To achieve the objective of the crowdsourcer, we need an incentive mechanism that prevents agents from bid rigging. The key idea here is to differentiate the agents so that for any potential collusion, there exist some disadvantaged agents who are incentivized to betray and compete. Compared to conventional incentive mechanisms~\citep{prelec2004bayesian,kamble2015truth,jin2017theseus,zheng2017budget} which treat players symmetrically, \algotitle~ is asymmetric, in the sense that it selects and pays the agents depending on their (arbitrary yet fixed) indices. As a simple example to illustrate this point, consider two agents with indices $i < j$, who agree to always bid at the same price. If we can only afford to choose one agent for each round, then our \algotitle always selects agent $i$. A second key idea in the construction of \algotitle is to identify and address two extremal strategies that could be excuted by the colluding agents, and suitably smooth out the resulting mechanism to handle all non-extremal strategies as well. The first extremal strategy we identify is when every agent bids very high initially hoping that the mechanism will be fooled into believing that their true thresholds are very high and thereby paying a large amount of budget for fewer tasks. In order to circumvent this attack, our algorithm imposes a carefully chosen upper bound on the reward paid in each round, and this upper bound depends on the budget and the number of rounds that remains. The second extremal strategy we identify is when every agent bids low initially and rejects the rewards from initial rounds so that the mechanism may pay a higher amount (per task) for later tasks due to a large leftover budget. We foil this attack by ensuring that the budget is distributed across different rounds in a relatively uniform manner. 

With these intuitions, we are now ready to formally present the \algotitle in Algorithm~\ref{algorithm1}. We parse through the details of the mechanism in the remainder of this subsection. The mechanism first calibrates current estimation of each agent's threshold based on their bids in the current round and the estimated thresholds from the previous round. After updating the thresholds of all the agents, $R$, the amount of incentive given for completing one task in the current round, is calculated by taking the minimum among $T_{1,n}, \cdots, T_{M,n}$. Notice that $R$ is upper bounded by $(B-B_{used})/(N-n+1)$, which helps even out the amount of budget spent in each round. This also implies that if $    \min(\{T_{m,n}\}_{m=1}^M)>\dfrac{B-B_{used}}{N-n+1}$, then 
we still try to pay one agent $(B-B_{used})/(N-n+1)$ instead of not paying anyone, as a strategy to elicit a potentially lower threshold from some dishonest agents. Based on the amount of reward $R$ for a single task, we decide $K$, the number of agents we select in this round. 

 The selection process is done by stable sorting the list of all agents based on their bids. The original list is ordered by agent's index, so that 1) agents with smaller bids are prioritized over those with larger bids, and 2) agents with smaller indices are always preferred over those with larger indices. This stable sort step is known to all agents, and intentionally creates a differentiation among agents. Thus in any potential collusion, there exists a most disadvantaged agent who is always the last to be selected and can only collect the leftover compared to other group members. Under our \algotitle, this disadvantaged agent knows that she/he can only gain more if 1) she/he is prioritized in the selection step, which means she/he has to bid lower than others, or 2) there is more budget left when she/he gets selected, which means that agents with smaller indices must get fewer payment per round. In any case, this agent is incentivized to defect and bring down the bids.

Finally, if every agent bids higher than the upper bound for each round in order to save budget for later rounds, we will try to pay $R$ to see if some of the agents will accept it. If the task is accepted with $R$, we will update $B_{used}$ and our estimation of agent's threshold $T_{m,n}$.

\subsection{Optimality with homogeneous thresholds}

In this section, we present the optimality guarantee for our proposed \algotitle for a setting with ``homogeneous'' thresholds. In this homogeneous-threshold setting, we assume that all agents have the same threshold, that is, $T_{1}^{*}=\cdots=T_{M}^{*}=T^*$. The following results show that our mechanism is optimal in this homogeneous setting.

\begin{theorem}
In the homogeneous threshold setting, under the \algotitle, the honest strategy is a perfect cooperative equilibrium (PCE) for all agents, and the \algotitle achieves a zero regret.
\label{theorem1}
\end{theorem}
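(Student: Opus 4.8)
The plan is to prove the two assertions separately. Throughout I use that, in the homogeneous setting, the offline optimum is $\opttruth=\min\{MN,\lfloor B/T^*\rfloor\}$: every counted task must be paid at least $T^*$ and there are only $MN$ agent--round slots, so an optimal packing pays exactly $T^*$ on $\opttruth$ tasks and $0$ elsewhere. It therefore suffices to show (i) honest bidding makes the mechanism accept exactly $\opttruth$ tasks, giving $\regretalgo=0$, and (ii) for every agent $i$ the honest payoff dominates the best utility $\bestutility_i$.

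For (i) I would first observe that under honest bidding each estimate stays pinned at $T^*$: the $\min$ update keeps $T_{m,n}=T^*$, and the calibration step only ever resets it to an accepted reward, which is itself $T^*$. Hence in each round $R=\min\{T^*,\rho_n\}\le T^*$, where $\rho_n\defn (B-B_{used})/(N-n+1)$ is the budget-per-remaining-round; an honest agent accepts iff $R=T^*$, so every accepted task is paid exactly $T^*$ and $\optalgo\le\opttruth$. For the reverse inequality I would track $\rho_n$ and show it is nondecreasing: writing $K=\lfloor \rho_n/R\rfloor$, a round with $R=T^*$ selects $K=\lfloor \rho_n/T^*\rfloor$ agents (capped at $M$) and $K\le \rho_n/T^*$ yields $\rho_{n+1}\ge\rho_n$, while a round with $R=\rho_n<T^*$ serves nobody, leaves the budget untouched, and has $\rho_{n+1}>\rho_n$. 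Thus the rounds form an initial block with $\rho_n<T^*$ (no tasks, budget carried forward) followed by a block with $\rho_n\ge T^*$ that is never left once entered. A short induction on the length $r_0$ of the second block shows it accepts $\min\{Mr_0,\lfloor B/T^*\rfloor\}$ tasks; checking the two cases (empty versus nonempty first block, where in the latter $\lfloor B/T^*\rfloor=r_0$) identifies this with $\opttruth$, giving $\regretalgo=0$.

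For (ii), fix agent $i$; the honest payoff is $T^*$ times the number of rounds in which at least $i$ agents are selected (under honest bids the served set is the lowest-index $K_n$ agents). I would bound $\bestutility_i$ using the two design features. First, since $R$ is the minimum of the current bids, $i$ cannot unilaterally lift the per-task reward above $T^*$; to be paid more, every selected agent must bid above $T^*$, which drives $R$ up to the cap $\rho_n$ and collapses $K=\lfloor \rho_n/R\rfloor$ toward $1$, so only the lowest-index agents are ever served. Second, such a high-reward profile cannot survive as a Nash response of the other agents: any starved (high-index) agent strictly prefers to undercut to just above $T^*$, get selected, and earn a positive reward, so in every $s_{-i}\in NE^G(s_i)$ the rewards are competed back down to $T^*$ and the allocation reverts to index order. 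Combining the two, in any $(s_i,s_{-i})$ with $s_{-i}\in NE^G(s_i)$ the effective reward per accepted task is $T^*$ and the served set follows the same index priority as honest play, whence $U_i(s_i,s_{-i})\le U_i(\text{honest})$; taking the supremum gives $\bestutility_i\le U_i(\text{honest})$, so the honest profile is a PCE.

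I expect the second design feature to be the main obstacle. Because bids are continuous, the ``undercut to just above $T^*$'' responses form an open set whose supremum is not attained, so the argument must be phrased directly in terms of the $\sup$ defining $\bestutility_i$ rather than an explicit maximizer. Moreover the reasoning must be carried across all $N$ rounds at once, since a deviating agent can reject early offers to shift budget into later rounds (the second extremal strategy the cap is designed to defeat), and the collapse of $K$ only bites because of the $(N-n+1)$ factor spreading the budget over the remaining rounds. Rigorously establishing that every subgame Nash response forces the per-task reward to $T^*$ and the allocation to the index order, uniformly over $i$'s deviations and over the multi-round budget dynamics, is the crux; by contrast the zero-regret half is a self-contained budget-accounting induction.
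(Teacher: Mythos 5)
Your zero-regret half (part (i)) is sound, and in fact more careful than the paper's own treatment: the paper asserts budget exhaustion informally, whereas your monotonicity of the per-remaining-round budget $\rho_n$ together with the block induction gives a genuine accounting argument. The problem is in part (ii), and it is not the sup-attainment issue you flagged but the large-budget regime. Your key claim --- that in every profile $(s_i, s_{-i})$ with $s_{-i} \in NE^G(s_i)$ some starved agent undercuts, so rewards are competed down to $T^*$ and $U_i(s_i,s_{-i}) \le U_i(\text{honest})$ --- is false when $B > MNT^*$. In that regime there are no starved agents: if every agent bids $B/(MN) > T^*$, then in each round $R = B/(MN)$ and $K = M$, all $M$ agents are selected and accept, and each earns $B/M$ over the game. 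No agent in $\setM \setminus \{i\}$ can gain by deviating (bidding higher changes nothing, since $R$ is the minimum of all bids and all $M$ agents are already served; bidding lower only lowers that agent's own payment), so this profile restricted to $\setM \setminus \{i\}$ is a Nash equilibrium of the subgame $G_{s_i}$ when $s_i$ is ``bid $B/(MN)$ every round.'' Hence $\bestutility_i \ge B/M > NT^* = U_i(\text{honest bidding})$, and the inequality you need fails: literal truth-telling is not a PCE there, so no argument of your form can establish it.

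The paper's proof avoids a uniform argument precisely because of this. It splits into three budget regimes ($B \le NT^*$, $B \ge MNT^*$, and $iNT^* < B < (i+1)NT^*$), proves a monotonicity lemma (the number of selected agents is nondecreasing over rounds) for the intermediate case, and in the case $B \ge MNT^*$ it does not analyze bids of $T^*$ at all: it takes the ``honest'' profile there to be every agent bidding $B/(MN)$, for which $U_i(s) = \bestutility_i = B/M$, and notes that the regret is still zero because all $MN$ tasks are completed. Whether one views that as a reinterpretation of ``honest'' or as a weakness in the theorem statement, your proof must handle $B > MNT^*$ separately; the undercutting mechanism you rely on has no bite once the budget can support all $M$ agents in every round at a price above $T^*$. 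Your competition argument is morally right in the other two regimes, where an excluded agent always exists, though there the non-attained suprema and multi-round budget-shifting deviations do require the care you describe.
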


This result shows that when the thresholds are homogeneous, the \algotitle can surprisingly achieve the same performance as if the thresholds were known exactly to the algorithm. Consequently, even if the agents did not collude, the \algotitle performs at least as well as any algorithm designed specifically to exploit the non-colluding assumptions. The proof of Theorem~\ref{theorem1} is provided in Appendix~\ref{appendix:theo1}.

\subsection{Bounded regret with heterogeneous thresholds}
\label{sec:hetero}

In this section, we consider a more general situation where different agents can have different thresholds. We present both negative and positive results for this setting. 

\subsubsection{Suboptimality of any mechanism}
Recall that in the homogeneous threshold setting, our \algotitle achieves a zero regret. Given this result, a natural question that arises is whether there exists an algorithm that can provide a zero regret guarantee for the heterogeneous setting as well. In this section we show that unfortunately no mechanism can provide such a strong guarantee in this general setting.
\begin{theorem}
In the heterogeneous threshold setting, no mechanism can ensure that honest strategy is a perfect cooperative equilibrium (PCE).

\label{theorem3}
\end{theorem}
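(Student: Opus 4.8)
The plan is to prove the impossibility by exhibiting, for an arbitrary mechanism, a heterogeneous instance in which the honest profile cannot simultaneously deliver every agent its best utility $\bestutility_i$. I would reduce to the simplest nontrivial case: $M=2$ agents, a single round $N=1$, and two distinct thresholds $t_1<t_2$ together with a budget chosen so that one expensive task nearly exhausts it, concretely $t_2 \le B < t_1 + t_2$ (so at most one task is affordable under honest bidding, and in particular $B<2t_2$). The organizing principle is a clean accounting identity: for \emph{any} strategy profile $s$, each agent's realized utility is exactly the total reward it collects, so summing over agents gives
\begin{align*}
\sum_{i \in \setM} U_i(s) \;=\; \sum_{n=1}^N \sum_{m=1}^M R_{m,n}\, I(R_{m,n}\ge T_m^*) \;\le\; B .
\end{align*}
Hence if the honest profile were a PCE we would need $\sum_i \bestutility_i \le \sum_i U_i(\text{honest}) \le B$, and it therefore suffices to construct the instance so that $\bestutility_1 + \bestutility_2 > B$.

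The heart of the argument is a lower bound on each $\bestutility_i$ via a \emph{mimicking} deviation together with an indistinguishability observation. Consider the low-threshold agent and let it deviate by bidding $t_2$, imitating a high-threshold agent. Because the mechanism observes only bids and accept/reject decisions, the transcript generated by this deviator is indistinguishable from that of a genuinely high-threshold agent, so the reward it is offered must coincide with what a truthful $t_2$-agent would be offered. A mechanism that collects this agent's data at all must then offer a reward near $t_2$, which the deviator accepts (its true threshold is only $t_1<t_2$) and pockets at a profit far exceeding its honest payoff; this yields $\bestutility_{\text{low}}\gtrsim t_2$. Symmetrically, the high-threshold agent can secure a reward of at least $t_2$ (it is the genuine high type), giving $\bestutility_{\text{high}}\gtrsim t_2$. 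Since only one task is affordable, neither grab is realizable simultaneously, yet each is individually feasible, so $\bestutility_1 + \bestutility_2$ can be driven arbitrarily close to $2t_2 > B$, contradicting the displayed budget bound.

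The step I expect to be the main obstacle is making the lower bound on $\bestutility_i$ fully rigorous. By definition $\bestutility_i$ is a supremum over agent $i$'s strategies in which the \emph{other} agent responds with a Nash equilibrium of the induced subgame $G_{s_i}$, and the supremum may be taken over the most favorable such equilibrium. I must therefore argue that, against the mimicking strategy, there is a Nash best response of the other agent under which the mimic still secures the high reward: the key point is that the other agent cannot profitably undercut, since winning the selection with a low bid would only trigger an offer below its own threshold, which it would reject for zero utility, so a Nash response leaving the reward intact exists. A secondary subtlety is that the claim is vacuous for a degenerate mechanism that never pays anyone, where $\bestutility_i=0$ and the honest inequality holds trivially; the argument therefore implicitly restricts attention to mechanisms that are individually rational and collect a nonzero amount of data, as any useful crowdsensing mechanism must, and it is exactly this property that forces the reward near $t_2$ in the indistinguishability step. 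Finally, the single-round instance lifts to arbitrary $N\ge 1$ by replicating it across rounds, so no generality is lost.
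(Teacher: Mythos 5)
Your core deviation---the low-threshold agent mimicking the bid $t_2$, with indistinguishability plus usefulness of the mechanism forcing it to be paid roughly $t_2$---is exactly the engine of the paper's proof, and your remark that the statement must implicitly exclude degenerate mechanisms also matches the paper (which assumes the mechanism achieves zero regret, hence pays bids when the budget is tight). The gap is in how you convert this into a contradiction. Your plan needs $\bestutility_{\text{low}}+\bestutility_{\text{high}}>B$, and for that you claim ``symmetrically'' that $\bestutility_{\text{high}}\gtrsim t_2$. This claim is false, and not just for degenerate mechanisms: being the genuine high type is a strict disadvantage, not a symmetric position. In your own instance ($N=1$, $t_2\le B<t_1+t_2$, bid-paying mechanism), whenever the high agent bids some $b\ge t_2$, the low agent's best response is to match or slightly undercut $b$ and take the single affordable payment (any payment of at least $t_1$ is profitable for it, whereas conceding yields it $0$); hence in \emph{every} Nash response of the low agent the high agent is left unfunded, and $\bestutility_{\text{high}}=0$. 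This holds both for index-priority (unfair) mechanisms---the paper's own \algotitle{} is one---and for fair randomized tie-breaking, where undercutting by $\epsilon$ beats the $50\%$ lottery. Consequently $\bestutility_{\text{low}}+\bestutility_{\text{high}}\le t_2\le B$, and your contradiction never materializes, even though honesty genuinely fails to be a PCE for precisely these mechanisms.

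The repair is to abandon the aggregate budget bound and compare per agent, which is what the paper does: under a zero-regret mechanism the honest profile pays the low agent exactly $t_1$, while the mimicking deviation secures it (at least an expected fraction of) $t_2$, where the paper's fair/unfair case split handles who wins ties---in the unfair case one chooses the instance so that the prioritized agent is the low-threshold one. This gives $\bestutility_{\text{low}}>U_{\text{low}}(\text{honest})$, so PCE already fails at that single agent. Equivalently, in your chain $\sum_i\bestutility_i\le\sum_i U_i(\text{honest})\le B$, the inequality you should contradict is the \emph{first} one (whose right-hand side is $t_1$, not $B$); targeting $B$ discards exactly the information that makes the argument work. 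All the needed ingredients (the mimicking deviation, the observation that the genuine high type cannot profitably undercut) are already in your write-up---only the budget-sum framing has to go.
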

The proof of this result is provided in Appendix~\ref{proof3.2}.

\subsubsection{\algotitle~in heterogeneous situation}\label{sec:application}
We now provide an upper bound on the regret of our \algotitle in the general heterogeneous threshold setting.  

\begin{theorem}
The regret $\regretalgo$ attained by \algotitle in the heterogeneous setting is upper bounded as
\begin{align*}
    \regretalgo \leq \nonumber
    &\begin{cases}
    \opttruth -iN &\emph{if } iT_i^{*}< B/N\leq iT_{i+1}^{*}\\
    \opttruth - \Big(iN + \Big\lfloor \frac{B - iNT_{i+1}^{*}}{T_{i+1}^{*}}\Big \rfloor\Big) &\emph{if }iT_{i+1}^{*}< B/N\leq (i+1)T_{i+1}^{*}\\
    \opttruth - \Big\lceil \frac{B}{T_2^{*}}\Big \rceil&\emph{if } B/N \leq T_{2}^{*}\\
    \opttruth- MN&\emph{if } B/N> MT_{M}^{*},
    \end{cases}
\end{align*}
where
\begin{align*}
\opttruth=\sum_{i=0}^M \min \Big(\Big\lfloor \frac{B - N\sum_{k=1}^iT_k^{\star}}{T_{i+1}^{\star}}\Big \rfloor,N \Big).
\end{align*}
Moreover, for every value of $N$, $M$ and heterogeneous thresholds, the algorithm attains $\regretalgo=0$ when $B$ is large enough.
\label{prop1}
\end{theorem}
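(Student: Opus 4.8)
The plan is to prove the equivalent lower bound $\optalgo \ge X$ in each of the four regimes, where $X$ is the quantity subtracted from $\opttruth$ in the corresponding case; since $\regretalgo = \opttruth - \optalgo$, any lower bound on the number of tasks the mechanism completes immediately yields the stated upper bound on regret. Throughout I would relabel agents so that $T_1^* \le T_2^* \le \cdots \le T_M^*$ and first pin down the benchmark $\opttruth$. When thresholds are known, maximizing the task-count objective~\eqref{eq1a} subject to the budget constraint~\eqref{eq1c} is a pure packing problem: each agent supplies up to $N$ unit-cost tasks, and there is never a reason to pay agent $m$ more than $T_m^*$. A standard exchange argument then shows the optimum is the greedy ``water-filling'' that saturates the cheapest agents first, and unwinding this greedy fill into closed form produces exactly the displayed expression for $\opttruth$.

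The second ingredient is a set of structural invariants of Algorithm~\ref{algorithm1} that hold no matter how the agents bid. First, the calibrated estimate $T_{m,n}$ is non-increasing in $n$: it is a running minimum that is only ever reset downward, to an accepted reward $R \le T_{m,n}$. Second, the reward obeys $R \le (B - B_{used})/(N-n+1)$, so the spend in any round never exceeds the remaining budget divided by the remaining rounds; summing this telescoping bound shows $B_{used} \le B$ always and, crucially, that no round can accumulate a large windfall budget for a later high-reward round, which is precisely what defeats the reject-and-delay manipulation flagged in Section~\ref{sec:algo}. Third, whenever an agent accepts, the reward exceeded its true threshold and its estimate is permanently pinned at that reward, so an agent served cheaply cannot later re-inflate the minimum that determines $R$.

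The heart of the argument is a lower bound on the number of agents served per round in any perfect cooperative equilibrium, which I would obtain from the betrayal mechanism created by the index-based stable sort. The idea is that whenever strictly fewer than the affordable number of cheap agents are being served at the prevailing reward, the most disadvantaged agent---the one the stable order always places last---can deviate by bidding just below the current reward, thereby moving ahead in the sort, being selected, and collecting a positive payment it was previously denied; hence its best utility satisfies $\bestutility_i > 0$, and any profile in which it earns nothing violates the PCE requirement $U_i(s_i, s_{-i}) \ge \bestutility_i$. Pushing this to its fixed point pins the equilibrium per-round service level to affordability counts of the form $\lfloor (B/N)/T_j^* \rfloor$. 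Comparing $B/N$ against the sorted thresholds then yields the four cases: $iT_i^* < B/N \le iT_{i+1}^*$ guarantees the $i$ cheapest agents every round and hence $\ge iN$ tasks; the Case~2 window additionally affords $\lfloor (B - iNT_{i+1}^*)/T_{i+1}^* \rfloor$ tasks from slot $i+1$; $B/N \le T_2^*$ collapses to essentially one servable agent, and since the task count is an integer at least $B/T_2^*$ it is at least $\lceil B/T_2^* \rceil$; and $B/N > MT_M^*$ makes all $M$ agents affordable every round, forcing the full $MN$ tasks and thus regret $0$. Summing the per-round guarantees over the $N$ rounds and subtracting from $\opttruth$ gives each displayed bound; the final large-$B$ claim follows because for fixed $N$, $M$ and thresholds, once $B > MNT_M^*$ the instance lies in the fourth regime, where the bound reads $\regretalgo \le \opttruth - MN = 0$ and nonnegativity forces $\regretalgo = 0$.

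I expect the equilibrium step to be the main obstacle. Formalizing the disadvantaged-agent deviation in the heterogeneous, uniform-reward, multi-round game is delicate: I must rule out joint deviations that combine bid inflation (to raise $R$) with strategic rejection (to shift budget across rounds), which requires leaning on the budget-cap invariant rather than a one-line contradiction, and I must track the floor and ceiling boundary effects carefully, since these are exactly what produce the precise constants $iN$, $\lfloor (B - iNT_{i+1}^*)/T_{i+1}^* \rfloor$, and $\lceil B/T_2^* \rceil$ in the four cases.
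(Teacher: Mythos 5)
Your proposal is correct in outline and follows the same overall strategy as the paper's proof: the identical four-way case split comparing $B/N$ to the sorted thresholds, a per-case lower bound on the equilibrium task count driven by the undercutting threat that the index-based stable sort creates, and the large-$B$ claim obtained from the regime $B/N > MT_M^*$ together with nonnegativity of regret. The difference is in how the equilibrium task count is pinned down. The paper proves no unified ``every PCE serves the affordable number of agents'' lemma; instead it handles $B/N \le T_2^*$ with a dedicated lemma (agent $1$ must accept at least $\lceil B/T_2^* \rceil$ tasks to extract the full budget, else agent $2$ undercuts) and reduces the other three cases to the homogeneous analysis of Theorem~\ref{theorem1}---for instance, in the regime $iT_i^* < B/N \le iT_{i+1}^*$ it exhibits the specific PCE in which the first $i$ agents inflate their bids to $B/(iN)$ and each completes $N$ tasks. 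Your route, if carried out, would be somewhat stronger (a bound holding under \emph{every} PCE rather than under one exhibited equilibrium), and you make explicit two steps the paper leaves implicit: the greedy/exchange derivation of $\opttruth$ and the $\regretalgo \ge 0$ argument in the final claim. Two cautions on your central lemma. First, the claim that equilibrium per-round service is pinned to ``affordability counts of the form $\lfloor (B/N)/T_j^* \rfloor$'' is not the right characterization: with thresholds $1,\,1.1,\,100$ and $B/N = 10$, the equilibrium serves $i = 2$ agents per round (the colluders inflate bids to $B/(2N) = 5$, exhausting the budget), far fewer than $\lfloor (B/N)/T_1^* \rfloor = 10$; your subsequent per-case statements, not the affordability-count slogan, are what actually match the theorem. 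Second, the betrayal step must respect the PCE definition: $\bestutility_i$ is a supremum over agent $i$'s strategies with the \emph{other agents playing a Nash equilibrium of the reduced game}, so concluding $\bestutility_i > 0$ requires showing the disadvantaged agent's undercut survives the others' re-optimization, and the undercut is only profitable when the resulting reward still exceeds that agent's own true threshold (this is exactly why agent $i+1$, with $T_{i+1}^* \ge B/(iN)$, cannot break the $i$-agent collusion). The paper's proof faces, and largely glosses over, the same subtlety, so this is a shared looseness rather than a defect unique to your plan.
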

The proof of this result is presented in Appendix~\ref{proofappendix}.

\subsubsection{Numerical experiments} 
\begin{figure}[h]
  \centering
    \includegraphics[width=0.5\textwidth]{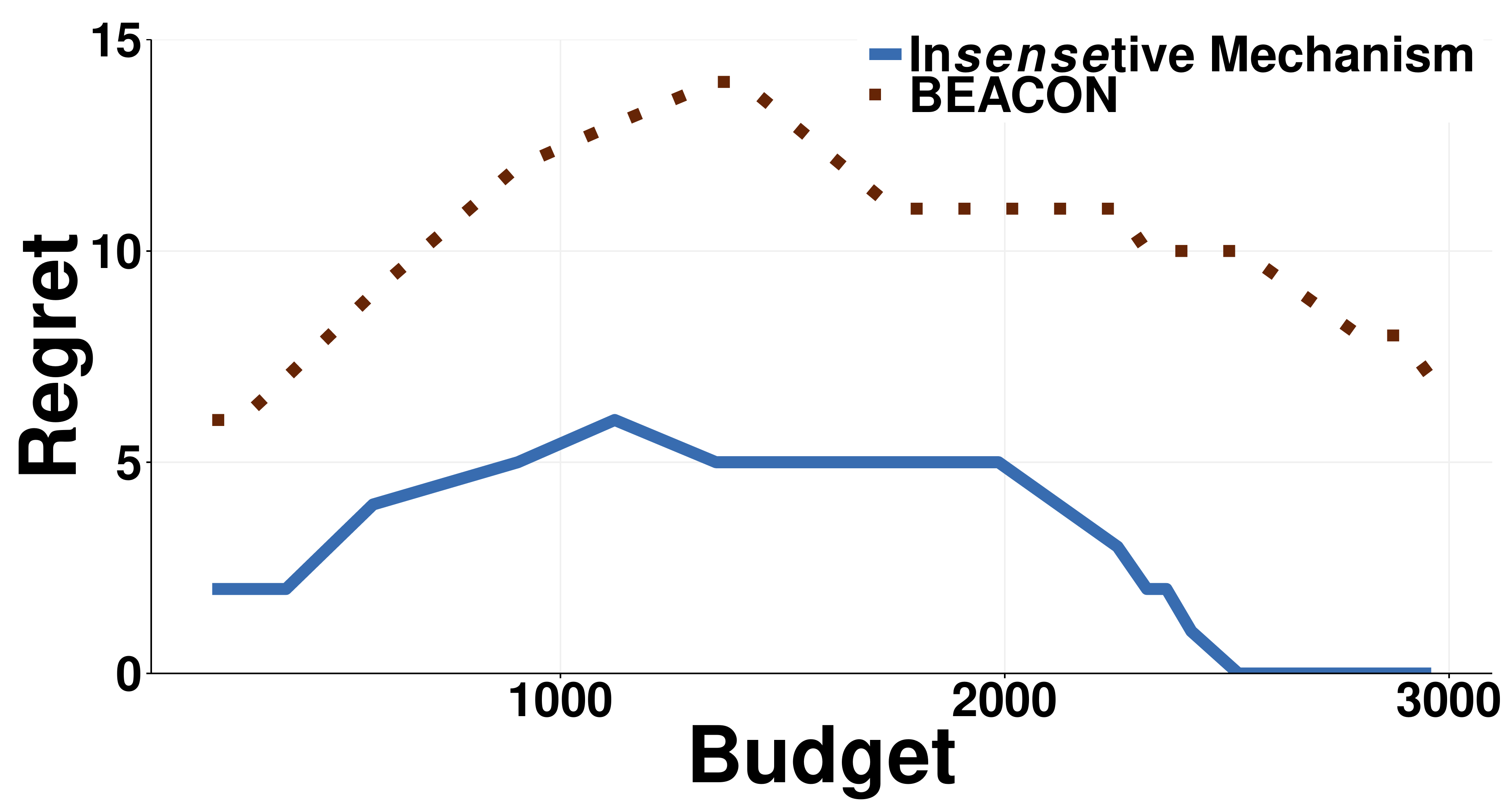}
    \caption{A heterogeneous threshold scenario where the game has $5$ rounds and $5$ vehicles with real thresholds of $20,40,50,70$ and $100$. The regret $\regretalgo$ of our \algotitle~(solid blue line) is consistently and significantly smaller as compared to the prior state-of-the-art  \emph{BEACON} (dotted brown line). 
    \label{fig1}}
\end{figure}

We now present some numerical simulations that compare the performance of our \algotitle with the state of the art. As discussed earlier, past literature focuses on a non-colluding setting, and here we compare with \emph{BEACON}~\citep{zheng2017budget} which is the well-known mechanism for the non-colluding setting. 

Figure~\ref{fig1} compares the regret $\regretalgo$ of our \algotitle with \emph{BEACON} for a 5-round 5-agent game in the heterogeneous setting where agents have true thresholds as $20,40,50,70,100$. The budget ranges from $20$ to $3,000$. Overall, our mechanism collects $82.94\%$ of the total number of data points compared to $\opttruth$, whereas {BEACON} attains only $46.63\%$ of $\opttruth$ on average. Compared to \emph{BEACON}, our \algotitle achieves an improvement of $36.31\%$. 

\subsubsection{Experiments on real-world data} 
\begin{figure}[h]
  \centering
    \includegraphics[width=0.5\textwidth]{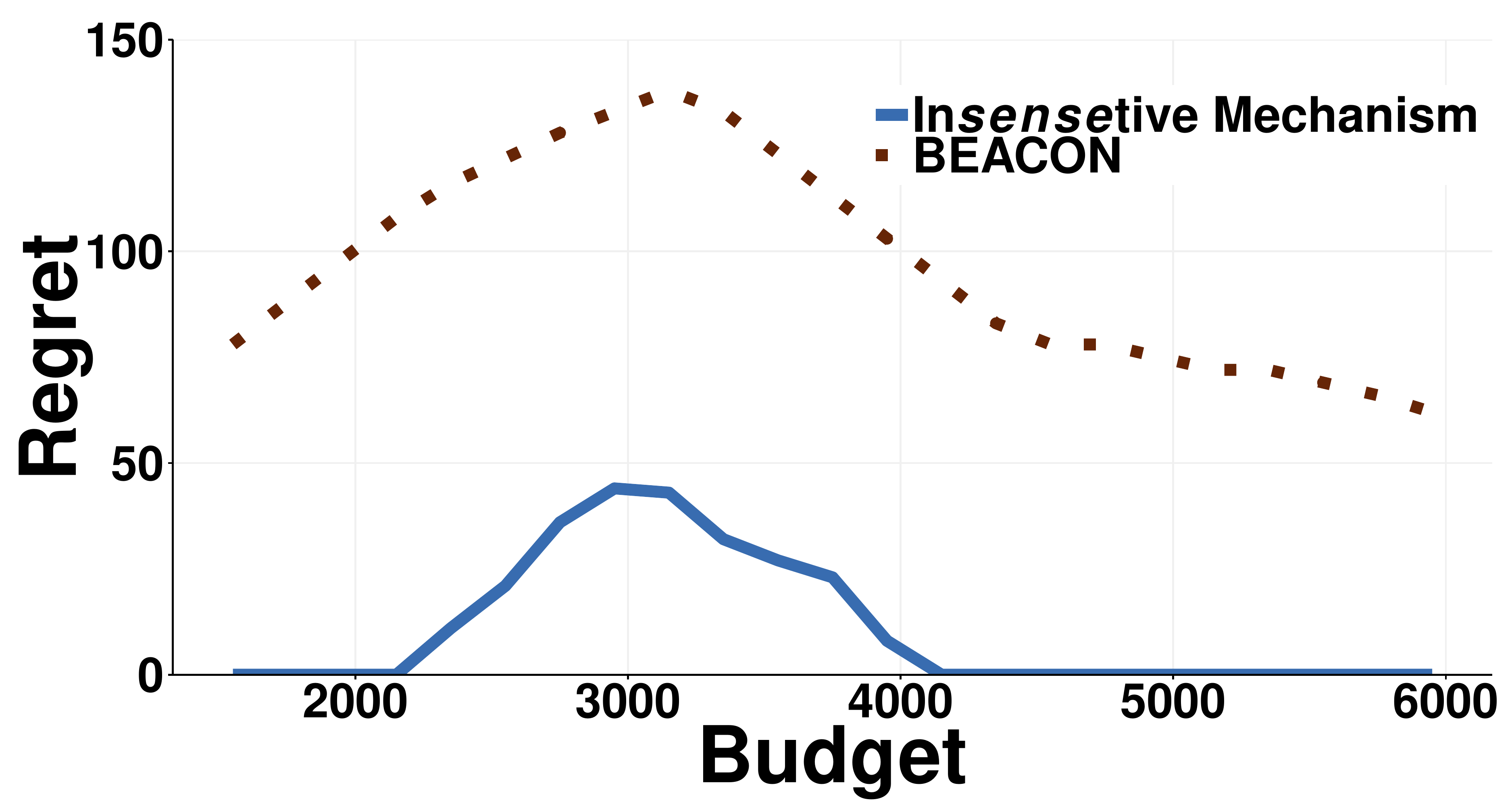}
    \caption{A real-world scenario where the game has $6$ rounds and $50$ vehicles with real heterogeneous thresholds. With different amounts of budget, the regret $\regretalgo$ of our \algotitle~(solid blue line) is significantly smaller than the baseline method \emph{BEACON} (dotted brown line). 
    \label{fig2}}
\end{figure}

We collected the real-time surge rates, which reflect how many times of the base fare drivers expect to get after riding passengers, from the real-world ride-sharing companies. We used these heterogeneous surge rates to represent the real heterogeneous thresholds of different cars. Therefore, the real threshold of each car was estimated by combining the base fare during a fixed length of time, which was a constant for all cars, and its individual surge rate. We collected the data of around 5,000 cars during 11 days. The data showed that the surge rates of most cars slightly vary with time and location.

Figure~\ref{fig2} compares the regret $\regretalgo$ of our \algotitle with \emph{BEACON} for a 6-round 50-agent game with different budgets. The budget ranges from 1,500 to 6,000. Compared to $\opttruth$, our \algotitle manages to collect $96.21\%$ of the total data points , while {BEACON} only acquires $63.08\%$ of $\opttruth$ on average. Our mechanism improves the performance of $33.13\%$ compared to \emph{BEACON}. 
\begin{figure}[h]
  \centering
    \includegraphics[width=0.5\textwidth]{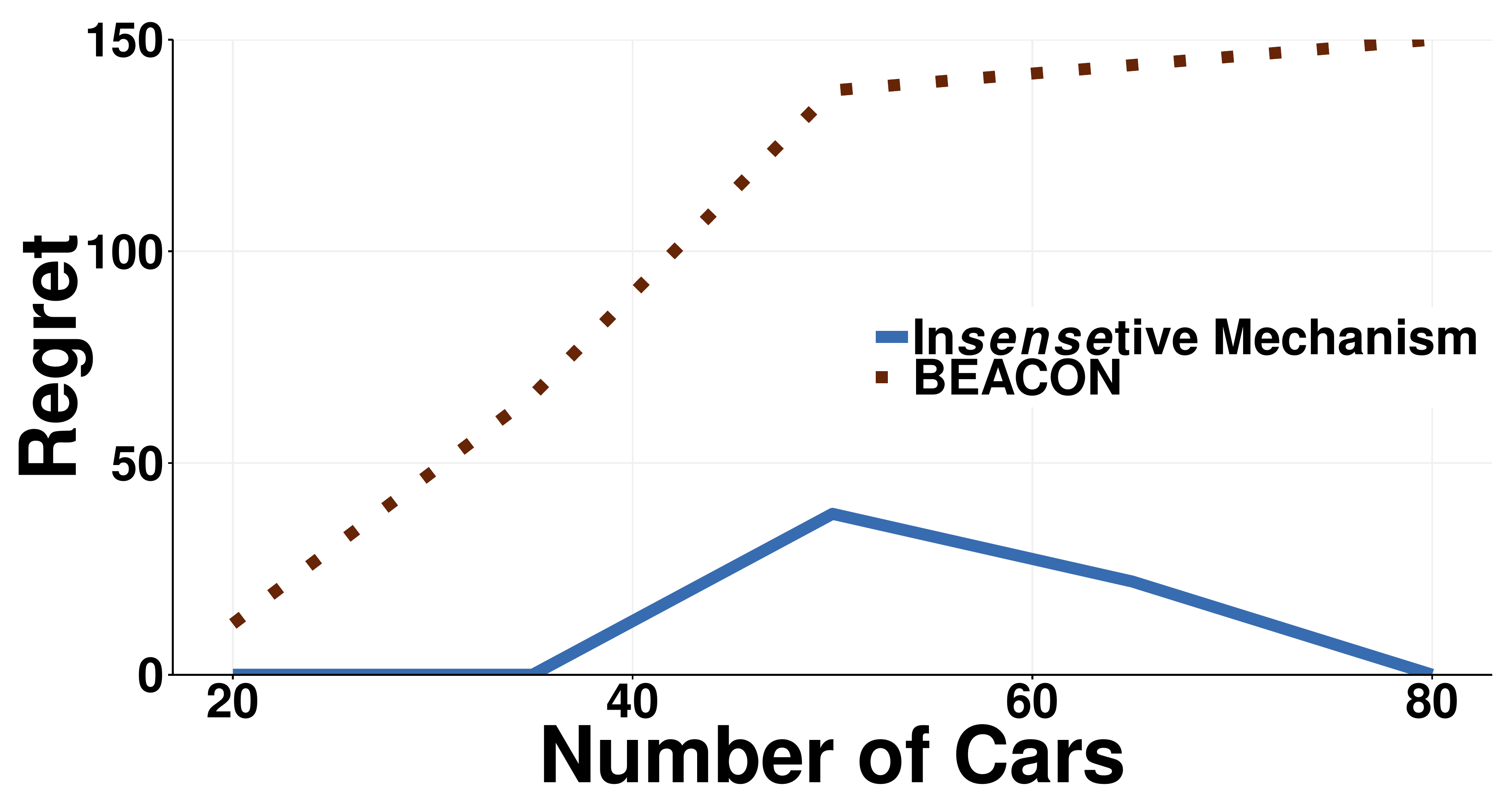}
    \caption{A real-world scenario where the game has $6$ rounds and a fixed budget of $3,000$. As the number of vehicles varies, the regret $\regretalgo$ of our \algotitle~(solid blue line) consistently outperforms the baseline method \emph{BEACON} (dotted brown line). 
    \label{fig3}}
\end{figure}

Figure~\ref{fig3} compares the regret $\regretalgo$ of our \algotitle with \emph{BEACON} for a 6-round game with a fixed budget of $3,000$ and the number of cars ranges from $20$ to $80$. As the number of cars increases, the regret of \emph{BEACON} increases accordingly. However, the regret of our \algotitle first achieves the peak with $50$ cars and then decreases, which demonstrates the robustness of our mechanisms. Given a fixed budget, our mechanism can collect $95.86\%$ of the total number of points, which achieves an improvement of $33.52\%$ on average compared to \emph{BEACON}.

\section{Discussion}
\label{sec:conclu}
Vehicular crowd sensing is a fast emerging paradigm to collect data for various modern applications. In this paper, we design the \algotitle to incentivize agents to collect and report data, with a focus on addressing the challenge of collusion among agents. We show that if the setting is homogeneous in nature, our mechanism performs as well as the best algorithm when the agents' preferences are known (that is, achieves a zero regret). We also provide negative and positive results for the heterogeneous setting: no mechanism can ensure a zero regret, while our mechanism performs reasonably well. 

A key idea behind our \algotitle that helps to combat collusion is the introduction of a bias in the choice of the agents. Interestingly, while biases are often undesirable in applications such as statistical learning~\citep{nie2017adaptively,wang2018your}, we find that a deliberate introduction of bias is beneficial in this game-theoretic setting.

In future work, we wish to  incorporate the learning of the spatial-temporal distribution of thresholds of agents from their history. The goal is to design incentive mechanisms that can exploit this ``prior'' knowledge in order to achieve a lower regret in the heterogeneous setting.

\medskip
\noindent{\textbf{Acknowledgments.}} This work was supported in part by Carnegie Mellon University’s Mobility21 National University Transportation Center (grant number 69A3551747111), which is sponsored by the US Department of Transportation, the Dowd Fellowship from the College of Engineering at Carnegie Mellon University, Intel, Google, and NSF grants CRII: CIF: 1755656 and CCF: 1763734.

\appendices

\section{Proof of Theorem~\ref{theorem1}}
\label{appendix:theo1}

In our proposed mechanism, given $B \leq MNT^{*}$, the honest strategy produces a deterministic reward for each agent:
\begin{subequations}
\begin{align}
    U_1(s) &= \min\Big(N, \Big\lfloor \frac{B}{T^{*}} \Big\rfloor \Big)T^{*} \label{eq8a}\\
    U_2(s) &= \min \Big(N, \Big\lfloor \frac{B - U_1}{T^{*}}\Big \rfloor \Big) T^{*}\label{eq8b}\\
    &\ldots\nonumber \\
    U_M(s) &=  \Big\lfloor \frac{B - U_1(s)- \ldots - U_{M-1}(s)}{T^{*}}\Big \rfloor T^{*}\label{eq8c}
\end{align}
\label{eq:finalresult}
\end{subequations}

The Equation~\eqref{eq8a},~\eqref{eq8b} and~\eqref{eq8c} naturally follow the mechanism that when all agents bid the same, agents with smaller indices always have a higher priority.

\begin{lemma}
In our incentive mechanism, for any round $n\in \{1, 2, 3, \ldots, N-1\}$, the number of selected agents in the round $n$ is not larger than the number of selected agents in the round $n+1$.  
\label{lemma1}
\end{lemma}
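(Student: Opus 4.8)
The plan is to track, across two consecutive rounds, the three quantities that the algorithm uses to decide how many agents to select: the budget remaining at the start of round $n$, which I denote $B_n$; the number of rounds left, $r_n \defn N-n+1$; and the minimum calibrated bid $\mu_n \defn \min_{m} T_{m,n}$. In this notation the per-task reward is $R_n = \min(\mu_n, B_n/r_n)$ and the number actually selected is $\min(K_n, M)$ with $K_n = \lfloor B_n/(R_n r_n)\rfloor$. Since $x\mapsto\min(x,M)$ is monotone, it suffices to prove $K_n \le K_{n+1}$, and since $x\mapsto\lfloor x\rfloor$ is monotone it further suffices to prove the real-valued inequality $B_n/(R_n r_n) \le B_{n+1}/(R_{n+1} r_{n+1})$.

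The first step is to establish two monotonicity invariants. For the budget cap, the total amount paid out in round $n$ is at most $K_n R_n$ (at most $K_n$ agents accept, each paid $R_n$), and $K_n R_n \le B_n/r_n$ because $K_n \le B_n/(R_n r_n)$. Hence $B_{n+1} \ge B_n - B_n/r_n = B_n\,r_{n+1}/r_n$, which rearranges to the invariant $B_{n+1}/r_{n+1} \ge B_n/r_n$: the per-round budget cap never decreases. For the bids, the update $T_{m,n}\gets\min(T_{m,n},T_{m,n-1})$ together with the calibration $T_{m,n}\gets R_n$ (where $R_n \le \mu_n \le T_{m,n}$) makes each recorded threshold nonincreasing in $n$, so $\mu_{n+1}\le\mu_n$.

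The second step is a case split on which term realizes $R_n$. If $R_n=\mu_n$ (i.e.\ $\mu_n\le B_n/r_n$), then $R_{n+1}=\min(\mu_{n+1}, B_{n+1}/r_{n+1}) \le \mu_{n+1}\le\mu_n=R_n$, so $1/R_{n+1}\ge 1/R_n$; combining this with the budget-cap invariant gives $B_{n+1}/(R_{n+1}r_{n+1}) \ge B_n/(R_n r_n)$, and taking floors yields $K_{n+1}\ge K_n$. If instead $R_n=B_n/r_n$, then $K_n=\lfloor B_n/((B_n/r_n)r_n)\rfloor=\lfloor 1\rfloor=1$, the smallest value $K$ can take; since $R_{n+1}\le B_{n+1}/r_{n+1}$ forces $K_{n+1}\ge 1$, the inequality holds trivially.

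The main obstacle is the first case, where the budget-cap and bid invariants push $R$ in opposite directions and one must check they compose to give $R_{n+1}\le R_n$. The crux is recognizing that this must be argued only when $R_n=\mu_n$; in the complementary case the reward is capped by the budget and $K_n$ collapses to $1$, sidestepping any need to control $R_{n+1}$. A secondary subtlety worth verifying is that the bid monotonicity $\mu_{n+1}\le\mu_n$ survives both the arrival of fresh bids and the calibration step, which relies precisely on $R_n\le\mu_n\le T_{m,n}$ so that calibration can only lower a recorded threshold. (The degenerate case $B_n=0$ or $\mu_n=0$, where $R_n=0$, is excluded since rewards are positive whenever the budget and thresholds are.)
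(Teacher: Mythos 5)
Your proof is correct, and it rests on the same two facts that drive the paper's proof: the per-round budget cap $B_n/r_n$ is non-decreasing across rounds, and the per-task reward is non-increasing whenever it is set by the minimum bid. The execution differs: the paper argues by contradiction, assuming $K_{n'} > K_{n'+1}$ for some round $n'$, writing the budget inequalities for the two rounds together with the assertion $T_{n'} \ge T_{n'+1}$ (reward non-increasing), and expanding to get $B_{n'} \ge K_{n'}(N-n'+1)T_{n'}$ and $B_{n'} < K_{n'}(N-n'+1)T_{n'}$ simultaneously. You instead prove monotonicity directly, reducing to the real-valued inequality $B_n/(R_n r_n) \le B_{n+1}/(R_{n+1} r_{n+1})$ and splitting on whether $R_n = \mu_n$ or $R_n = B_n/r_n$. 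That case split is not cosmetic: the paper's assertion $T_{n'} \ge T_{n'+1}$ is \emph{false} in general when the round-$n'$ reward is capped by the budget term $B_{n'}/r_{n'}$ (if all agents bid above the cap and reject, the cap, and hence the reward, strictly increases in the next round), so the paper's chain of inequalities does not apply in that regime. Your Case B handles exactly this situation by observing $K_n = \lfloor 1 \rfloor = 1 \le K_{n+1}$, which closes the gap. (The paper's argument can be repaired, since the contradiction hypothesis $K_{n'} > K_{n'+1} \ge 1$ forces $K_{n'} \ge 2$ and hence $R_{n'} = \mu_{n'}$, but the paper never makes this observation.) You are also more careful on two minor points the paper glosses over: selected agents may reject, so the round-$n$ expenditure is only bounded above by $K_n R_n$ rather than equal to it (your invariant only needs the upper bound), and the number actually selected is $\min(K_n, M)$, which inherits monotonicity from $K_n$.
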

\begin{proof}
If this theorem does not hold, then there must exists a $n' \in \{1, 2, \ldots, N\}$, where the number of selected agents at round $n'$, $K_{n'}$, is larger than the number of selected agents at round $n+1$, denoted as $K_{n'+1}$. That means there must be at least one agent who is selected in round $n'$ but not selected in round $n'+1$. Because in each round $n'$ we select $K_{n'} = \frac{B - B_{used}}{N - n' + 1}$ agents. If we denote the budget left in round $n'$ as $B_{n'}$, and the reward for each selected agents in round $n'$ and $n'+1$ as $T_{n'}, T_{n'+1}$, this situation can be described as 
\begin{subequations}
\begin{align*}
    \dfrac{B_{n'}}{N- n' + 1}\ge K_{n'}T_{n'}, &\dfrac{B_{n'} - K_{n'}T_{n'}}{N - (n' + 1) + 1} < K_{n'+1}T_{n'+1}\\
    K_{n'}>K_{n'+1}, &T_{n'}\geq T_{n'+1}
\end{align*}
\end{subequations}
Expand the inequalities we have 

\begin{equation*}
    B_{n'} \ge K_{n'}(N- n' + 1)T_{n'}, B_{n'}  < K_{n'}(N- n' + 1)T_{n'}
\end{equation*}
which leads to a contradiction. Therefore the lemma holds.
\end{proof}
With this lemma, we prove that the honest strategy $s$ is a PCE by splitting the problem into 3 cases.

\textbf{Case 1.} $B \le NT^{*}$: 
For the first agent, if she/he bids honestly throughout the game, then the best response of other agents will have no effect on the her/his total reward $ \lfloor B / T^{*} \rfloor T^{*}$. Otherwise, if she/he bids $T_{1,1}> T^{*}$ in the first round, then the Nash Equilibrium for the remaining $m-1$ agents is that all agents bid honestly as $T^{*}$.  
If the first agent bids $T^{*}$, but increase her/his bid in the round $n>1$, then the mechanism will use her/his previous bid $T^{*}$ as her/his reward and she/he would accept the reward to finish the task according to our assumption. In this case, the total reward is still $ \lfloor B / T^{*} \rfloor T^{*}$. Therefore, we have 
 $   \bestutility_1 = \max(0, \lfloor B / T^{*} \rfloor T^{*}) =\lfloor B / T^{*} \rfloor T^{*}$
. The reward of agent 1 is upper bounded by the reward gained adopting the honest strategy $U_1(s)=\lfloor B / T^{*} \rfloor T^{*}$. Therefore, we have 
$U_1(s) = \bestutility_1$. For the other agents $i\in \setM\backslash\{1\}$ and any strategy $s_2$, the Nash Equilibrium is still that all these agents bid honestly as $T^{*}$. In this case, the first agent receives $ \lfloor B / T^{*} \rfloor T^{*}$ and other agents receive $0$. As $B \le NT^{*}$, we know that $U_i(s) = \bestutility_i = 0$. Since for any $i$, we have $U_i(s) \ge \bestutility_i$, the honest strategy $s$ is a PCE when $B \le NT^{*}$.

\textbf{Case 2.} $B \ge MNT^{*} $:
In this case, the "truth-telling" strategy for crowdsoucer is that at least one agent bids in the range of $T\in [T^{*}, B/MN]$. It is obvious that for any agent $i$, her/his total reward is maximized when she/he bids $B /M T^{*}$ in each round, and collecting a reward of $B/M$. If she/he ever attempts to bid higher, the Nash Equilibrium of the other agents will make her/him no gain in this round, and thus in the whole game. Adopting the honest strategy $s$ where each agent bids $B /M T^{*}$, we have 
    $U_i(s) = \bestutility_i = B/M$
for any $i$. Hence, $s$ is a PCE when $B \ge MNT^{*}$.

\textbf{Case 3.} $iNT^{*} < B < (i+1)NT^{*}~~\forall i \in \{1,\cdots, M-1\}$:
This is the scenario where the budget is enough to select $i$ agents during all $N$ rounds, but not enough to pick up $i+1$ agents during all $N$ rounds.  According to Eq.~\ref{eq:finalresult}, there is 
\begin{subequations}
\begin{align*}
    U_1(s) &= \cdots=U_i(s)=NT^{*}\\
    U_{i+1}(s) &=  \Big\lfloor \frac{B - \sum_{j=1}^i U_j(s)}{T^{*}}\Big \rfloor T^{*}=\Big\lfloor \frac{B - iNT^{*}}{T^{*}}\Big \rfloor T^{*}\\
    U_{i+2}(s)&=\cdots= U_M(s)=0
\end{align*}
\end{subequations}

According to the Lemma ~\ref{lemma1}, there exists a transition round $1\leq n\leq N$ and we also know that the $i+1$ agent is selected from the $N-\lfloor \frac{B - iNT^{*}}{T^{*}}\rfloor+1$ round.

For any agent $j<i+1$, if she/he sticks to the honest strategy $s_j = s$, then $U_j(s_j) = NT^{*}$ no matter what other agents bid. If any of agent $j<i+1$ chooses to give a higher bid, we denote $n$ as the first round where she/he does so. If $n > 1$ then the outcome does not change. But if $n = 1$, i.e., she/he bids $T_{j,1} > T^{*}$ from the first round, then the agent $i+1$'s best response is to bid $T_{i+1,1}=T^{*}$. 
When agent $i+1$ takes the best response to bid $T^{*}$, according to the Lemma~\ref{lemma1}, she/he will receive an extra reward of $T^{*}$. 
Therefore the upper bound of the reward for agent $j$ is $NT^{*}$. Hence $U_j(s)= \bestutility_j=NT^{*}$. 

For agent $i+1$, similarly, if she/he bids higher than $T^{*}$ from the round $N-\lfloor \frac{B - iNT^{*}}{T^{*}} \rfloor+1$, then the best response of any agent $j>i+1$, is to bid $T^{*}$ to get extra reward. Therefore we have $U_i(s)=\bestutility_i=\lfloor \frac{B - iNT^{*}}{T^{*}} \rfloor T^{*}$.

For any agent $j>i+1$, 
the best strategy of any agent $j'\leq i+1$ is to bid honestly as $T^{*}$. Therefore, $\bestutility_j = 0$. Then we have
\begin{align*}
\begin{cases}
     U_j(s)  \ge \bestutility_j = NT^{*}~~~&\text{if }j <i+1\\
      U_j(s) \ge \bestutility_j = \Big\lfloor \frac{B - iNT^{*}}{T^{*}}\Big \rfloor T^{*}~~~&\text{if }j = i+1\\
    U_j(s) \ge \bestutility_j = 0   ~~~&\text{if }j > i+1
\end{cases}
\end{align*}
In summary, the honest strategy is a PCE in all three cases.

If the real threshold $T^{*}$ satisfies $T^{*} \ge B / NM$, at most $\lfloor B/T^{*} \rfloor$ agents can be actuated in the game. If each agent bids $T^{*}$ honestly, based on the mechanism, the payment for each agent in each round will stay at $T^{*}$. By the end of the game, according to the mechanism all budget would have been spent to actuate as many agents as possible. Therefore the objective for the crowdsourcer is achieved. If the real threshold $T^{*} < B / NM$, the budget is large enough to pay each agent in every round, as long as all agents bid the same value between $T^{*}$ and $B/NM$ inclusively.  
The number of agents selected in one round, denoted as $K$, is equal to $M$, and the total number of agents actuated is $NM$.

\section{Proof of Theorem~\ref{theorem3}}
\label{proof3.2}

Given two selected agents $i,j \in \setM$ and an $1$-round game, without the loss of generality, we assume $T_i^{*}<T_j^{*}$ holds. Suppose that there exists an optimal incentive mechanism: the mechanism should optimize the utilization of the budget to ensure $\regretalgo=0$ in any setup of $B$ and real thresholds. that is, the mechanism should always pay real threshold to each agent when budget is not enough to support all agents in all rounds.

We divide all incentive mechanisms into two categories: 1) fair incentive mechanism and 2) unfair incentive mechanism. A mechanism is fair if any agents $i,j\in \setM$ bid the same $T_i = T_j$, agents $i$ and $j$ will be selected with equal probability, which doesn't hold for the unfair case.

If the above optimal mechanism is fair, then the utility of agent $i$ by taking strategy $s'$ of bidding as $T_j^{*}$ will be
\begin{align*}
    U_i(s') = U_j(s) = T_j^{*} > T_i^{*}.
\end{align*}
If the above optimal mechanism is unfair, since the real threshold is unknown, without the loss of generality, if agent $i$ has a higher priority than agent $j$, then the utility of agent $i$ by bidding $T_j$ is 
\begin{align*}
    U_i(s') = T_j^{*} > T_i^{*}.
\end{align*}
However, according to PCE condition, we have
$U_i(s)=T_i^{*} \geq U_i(s') = T_j^{*}$
which leads to a contradiction.

\section{Proof of Theorem~\ref{prop1} }\label{proofappendix}

\textbf{Case 1.} $B/N\leq T_2^{*}$:
In this case, since the budget can afford at most $2$ agents. The budget can afford agent $1$ for all rounds, or agent $2$ for several rounds. We assume $T_1^{*}<T_2^{*}$ and agent $1$ comes with a higher priority $P_1>P_2$. Since $B/N\leq T_2^{*}$, there must exist a strategy for agent $1$ to obtain full reward $B$, which is also the best strategy for agent $1$.
\begin{lemma}
\label{theorem_appendix_1}
When $B/T_2^{*}> \lfloor B/T_2^{*} \rfloor$, the agent $1$ needs to take at least $ \lceil B/T_2^{*} \rceil$ tasks to get the full reward $B$.
\end{lemma}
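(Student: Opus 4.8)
The plan is to reduce the lemma to a single structural fact: under the mechanism, the per-task reward that agent $1$ can collect in any round is at most $T_2^{*}$. Granting this, the counting argument is immediate. If agent $1$ accepts $t$ tasks and the reward on the $\ell$-th accepted task is $r_\ell \le T_2^{*}$, then collecting the full reward forces $\sum_{\ell=1}^{t} r_\ell = B \le t\,T_2^{*}$, so $t \ge B/T_2^{*}$. Since $t$ is a nonnegative integer and the hypothesis $B/T_2^{*} > \lfloor B/T_2^{*}\rfloor$ says $B/T_2^{*}$ is not an integer, the smallest integer satisfying $t \ge B/T_2^{*}$ is $\lceil B/T_2^{*}\rceil$, which is exactly the claimed bound.

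The substance, therefore, is the per-task cap. First I would record the easy half: in round $n$ the offered reward is $R = \min\big(T_{1,n},\dots,T_{M,n}, (B-B_{used})/(N-n+1)\big)$, and in the first round this is at most $B/N \le T_2^{*}$ by the case hypothesis. The difficulty is that in later rounds the budget-cap term $(B-B_{used})/(N-n+1)$ can exceed $T_2^{*}$ whenever agent $1$ has deliberately withheld acceptances to inflate the leftover budget (precisely the second extremal strategy flagged in the description of the mechanism). So a purely mechanical bound does not suffice, and I would instead invoke agent $2$'s defection threat.

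Concretely, observe that for agent $1$ to collect all of $B$, agent $2$ must earn $0$ in total; hence in any round where agent $1$ accepts a reward $R > T_2^{*}$, agent $2$ earns nothing that round. Since agent $2$ would rationally accept any offer of at least its threshold $T_2^{*}$, this means agent $2$ is \emph{not} selected in that round, so $K=1$, agent $1$ is the sole recipient, and (because agent $1$ has the higher priority) agent $2$ must be bidding at least $R$. I would then exhibit a profitable deviation for agent $2$: by bidding some value in the interval $(T_2^{*}, R)$, agent $2$ becomes the strictly lowest bidder, is selected ahead of agent $1$ (the selection count $K$ only weakly increases as the per-task reward drops), and is offered a reward still exceeding $T_2^{*}$, which it accepts for a strictly positive gain in a round where it previously earned nothing. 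This contradicts agent $2$ playing a best response, so no configuration consistent with the equilibrium defining $\bestutility_1$ can pay agent $1$ more than $T_2^{*}$ on a single task.

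The main obstacle is making this last step airtight across the whole $N$-round game rather than in a single round. The mechanism calibrates $T_{2,n}\gets R'$ once agent $2$ accepts, and accepted payments consume budget, so agent $2$'s one-shot gain could in principle be offset in later rounds; I would need to argue that, given $B/N \le T_2^{*}$ and the stable-sort priority favoring agent $1$, agent $2$'s deviation is still weakly improving over the entire horizon, so that capping the per-task reward at $T_2^{*}$ is consistent with the best-utility and PCE notions defined earlier. Once that dynamic-consistency check is in place, the counting argument above completes the proof.
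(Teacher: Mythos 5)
Your proposal is correct and takes essentially the same route as the paper: the paper assumes agent 1 collects the full $B$ with $k_1 \le \lfloor B/T_2^{*} \rfloor$ tasks, observes that the average per-task reward then exceeds $T_2^{*}$, and derives a contradiction because agent 2 can bid $T_2^{*}$ in such a round and capture some reward---which is exactly your undercutting/per-task-cap argument combined with the same non-integrality counting. The multi-round dynamic-consistency issue you flag is not addressed in the paper's proof either (it simply asserts that agent 2 ``can bid $T_2^{*}$ to get some reward''), so your version is, if anything, the more careful of the two.
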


\begin{proof}
Assume agent $1$ can get reward $B$ by taking number of tasks $k_1<\lceil B/T_2^{*} \rceil$, equivalently $k_1 \leq \lfloor B/T_2^{*} \rfloor$, the average reward for each round would be $B/k_1\geq B/\lfloor B/T_2^{*} \rfloor>T_2^{*}$ according to $B/T_2^{*}> \lfloor B/T_2^{*} \rfloor$. This means that the average reward exceeds $T_2^{*}$, then there must exist some rounds that agent $2$ can bid $T_2^{*}$ to get some reward such that agent $1$ can not get the full reward $B$, which leads to a contradiction.
\end{proof}
When $B/T_2^{*}= \lfloor B/T_2^{*}\rfloor= \lceil B/T_2^{*} \rceil$, the best strategy for agent $1$ would be to bid $T_2^{*}$ every time to get full reward $B$, which needs to take $\lceil B/T_2^{*}\rceil$ tasks. Therefore, given the Lemma ~\ref{theorem_appendix_1} and above statement, we know that for the upper bound for regret is $\opttruth- \lceil B/T_2^{*} \rceil$.

\textbf{Case 2.} $iT_{i+1}^{*} <B/N\leq (i+1)T_{i+1}^{*}$:
In this case, budget $B$ can support the first $i$ agents for $N$ rounds with bidding $T_{i+1}^{*}$ but not enough to support the first $i+1$ agents. For the first $i$ agents, this scenario is the same as the homogeneous threshold case 3:  $iNT^{*}<B<(i+1)NT^{*}$. 
Therefore, at least $iN$ accepted tasks can be obtained. Meanwhile, since we have $B/N>iT_{i+1}^{*}$, there are budget left for the agent $i+1$ in some rounds.
Based on our mechanism, the best strategy for agent $i+1$ would be to bid $T_{i+1}^{*}$. 
Therefore, in this cases, we can finally achieve the upper bound for regret $\opttruth-iN - \Big\lfloor \frac{B - iNT_{i+1}^{*}}{T_{i+1}^{*}}\Big \rfloor$.

\textbf{Case 3.} $iT_i^{*}< B/N\leq iT_{i+1}^{*}$: In this case, the budget can support the first $i$ agents for $N$ rounds with bids $T_{i}^{*}$, but not enough to support all agents for bidding $T_{i+1}^{*}$. Therefore, in this case, there is no chance for agent $i+1$ to get any reward if the first $i$ agents achieve a PCE. 
This case is equivalent to the homogeneous threshold case 2: $B>iNT^{*}$ where $T=T_i^{*}$. Based on the proof in Appendix~\ref{appendix:theo1}, the PCE would be that all of first $i$ agents bid $B/iN$. Thus, each one will obtain the reward of $B/i$ and finish $N$ tasks. The upper bound for regret would be $\opttruth-iN$.

\textbf{Case 4.} $B/N>MT_M^{*}$:
This case is the same as the homogeneous threshold case: $B>MNT^{*}$ where $T^{*}=T_M^{*}$. Based on the proof shown in Appendix~\ref{appendix:theo1}, the upper bound for regret would be $\opttruth-MN$.

\bibliographystyle{apalike}
\bibliography{reference}
\end{document}